\newtheorem{definition}{Definition}[section]
\newtheorem{lemma}[definition]{Lemma}
\newtheorem{example}[definition]{Example}
\newcommand{\newvector}{\vec{\eta}}
\newcommand{\oldvector}{\vec{v}}
\newcommand{\gand}{\cap}
\newcommand{\gnot}{\,^\sim}
\newcommand{\bnot}{\,^\lnot}
\newcommand{\band}{\land}
\newcommand{\bor}{\lor}
\newcommand{\Rn}[1]{\lowercase\expandafter{\romannumeral #1\relax}}
\newcommand{\vto}{\overset{\vec{v}}{\longrightarrow}}
\newcommand{\vtos}{\mathrel{\overset{\vec{v}}{\longrightarrow}\!\!{}^*}}
\newcommand{\vint}{\overset{\vec{v}}{\rightarrowtail}}
\newcommand{\vgen}{\overset{\vec{v}}{\Rightarrow}}
\newcommand{\norm}[1]{\left \langle{#1}\right \rangle}
\newcommand{\normp}[1]{\norm{\norm{#1}}}
\newcommand{\primeII}{\prime\prime}
\newcommand{\primeIII}{\prime\prime\prime}
\newcommand{\primeIV}{\prime\prime\prime\prime}
\newcommand\blfootnote[1]{%
  \begingroup
  \renewcommand\thefootnote{}\footnote{#1}%
  \addtocounter{footnote}{-1}%
  \endgroup
}
\DeclareMathOperator{\ispan}{span}
\DeclareMathOperator{\invmod}{invmod}
\definecolor{Gray}{gray}{0.9}
\author{
  David Greve\\
  \texttt{david.greve@rockwellcollins.com}
  \and
  Andrew Gacek\\
  \texttt{andrew.gacek@rockwellcollins.com}
}
\title{Trapezoidal Generalization over Linear Constraints}
\begin{document}

\maketitle
\begin{abstract}

We are developing a model-based fuzzing framework that employs
mathematical models of system behavior to guide the fuzzing
process. Whereas traditional fuzzing frameworks generate tests
randomly, a model-based framework can deduce tests from a behavioral
model using a constraint solver.  Because the state space being
explored by the fuzzer is often large, the rapid generation of test
vectors is crucial.  The need to generate tests quickly, however, is
antithetical to the use of a constraint solver.  Our solution to this
problem is to use the constraint solver to generate an initial
solution, to generalize that solution relative to the system model,
and then to perform rapid, repeated, randomized sampling of the
generalized solution space to generate fuzzing tests.  Crucial to the
success of this endeavor is a generalization procedure with reasonable
size and performance costs that produces generalized solution spaces
that can be sampled efficiently.  This paper describes a
generalization technique for logical formulae expressed in terms of
Boolean combinations of linear constraints that meets the unique
performance requirements of model-based fuzzing.  The technique
represents generalizations using \emph{trapezoidal} solution sets
consisting of ordered, hierarchical conjunctions of linear constraints
that are more expressive than simple intervals but are more efficient
to manipulate and sample than generic polytopes.  Supporting materials
contain an ACL2 proof that verifies the correctness of a low-level
implementation of the generalization algorithm against a 
specification of generalization correctness.  Finally a
post-processing procedure is described that results in a restricted
trapezoidal solution that can be sampled (solved) rapidly and
efficiently without backtracking, even for integer domains.  While
informal correctness arguments are provided, a formal proof of the
correctness of the restriction algorithm remains as future
work.\blfootnote{ This research was developed with funding from the
  Defense Advanced Research Projects Agency (DARPA) under DARPA/AFRL
  Contract FA8750-16-C-0218. The views, opinions and/or findings
  expressed are those of the author(s) and should not be interpreted
  as representing the official views or policies of the Department of
  Defense or the U.S. Government.  Approved for Public Release,
  Distribution Unlimited}

\end{abstract}

\section{Motivation}

Fuzzing is a form of robustness testing in which random, invalid or
unusual inputs are applied while monitoring the overall health of the
system.  Model-based fuzzing is a fuzzing technique that employs a
mathematical model of system behavior to guide the fuzzing process and
explore behaviors that would be difficult to reach by chance.  Whereas
many fuzzing frameworks generate tests randomly, a model-based
framework can deduce tests from a behavioral model using a constraint
solver.  Because the state space being explored by the fuzzer is
generally large, the rapid generation of test vectors is crucial.
Unfortunately, the need to generate tests quickly is antithetical to
the use of a constraint solver.  Our solution to this problem is to
use a constraint solver to generate an initial solution and then to
generalize that solution relative to the constraint.  Test generation
in our model-based fuzzing framework, therefore, consists of repeated,
randomized sampling of the generalized solution space.  Generalization
is crucial to the performance of our framework because it allows us
to decouple constraint solving (slow) from test generation (fast).

\begin{figure}[ht]
\centering
\begin{minipage}{0.65\linewidth}
\begin{algorithm}[H]
  \DontPrintSemicolon
  \While{True}{
    $L \longleftarrow nextConstraint()$\;
    $\vec{v},SAT \longleftarrow solve(L)$\;
    \If{SAT} {
      $T \longleftarrow generalize(L,\vec{v})$\;
      $T^{\prime},\sigma \longleftarrow restrict(T,\vec{v})$\;
      \For{A While} {
        $\vec{w} \longleftarrow sample(T^{\prime},\sigma)$\;
        $fuzzTarget(\vec{w})$\;
      }
    }
  }
  \caption{Conceptual Model-Based Fuzzing Algorithm\label{Fuzzing}}
\end{algorithm}
\end{minipage}
\end{figure}

The fuzzing algorithm, outlined in Algorithm~\ref{Fuzzing}, begins by
identifying an appropriate logical constraint, selected according to
some testing heuristic.  A constraint solver then attempts to generate
a variable assignment that satisfies the constraint.  If it succeeds,
the solution is generalized relative to the constraint.  The
generalization is then restricted so that it can be sampled
efficiently.  The restricted generalization is then repeatedly sampled
to generate random vectors (known to satisfy the constraint) that are
then applied to the fuzzing target.  This process is repeated for the
duration of the fuzzing session.

Trapezoidal generalization is uniquely suited for use in model-based
fuzzing.  The trapezoidal generalization process is reasonably
efficient, both in terms of speed and the size of the final
representation.  Unlike interval generalization, trapezoidal
generalization is capable of representing many linear model features
exactly, allowing sampled tests to better target relevant model
behaviors.  Finally, unlike generic polytope generalization,
trapezoidal generalization supports efficient sampling of the solution
space, enabling rapid test generation and addressing the performance
requirements of model-based fuzzing.

This paper provides details on our trapezoidal generalization,
restriction, and sampling algorithms and the supporting material
includes an ACL2 proof of the correctness of a low-level
implementation of generalization.  Section~\ref{sec:Generalization}
describes the trapezoidal data structure and show how it can be used
to generalize solutions relative to logical formulae expressed as
Boolean combinations of linear constraints.  It includes a formal
specification of generalization correctness and an outline of crucial
aspects of the correctness proof for our generalization algorithm.
Section~\ref{sec:Sampling} describes sampling and provides
details of the restriction process that refines a trapezoid so that it
can be rapidly and efficiently sampled (solved) without backtracking,
even for integer domains.  While informal correctness arguments are
provided, a formal proof of the correctness of the restriction
algorithm remains as future work.  Section~\ref{sec:Formalization}
provides background on our implementation and formalization of
trapezoidal generalization and Section~\ref{sec:Related} compares our
experience using both trapezoidal and interval generalization and
provides an overview of related work in the field.

\subsection{Problem Syntax}

Let $x_1, \ldots, x_n$ be variables of mixed integer and
rational types.  Each variable has an associated numeric
\emph{dimension}, denoted by the variable subscript, that establishes
a complete ordering among the variables.  Numeric expressions are
represented as polynomials (P) over variables having the form $c_nx_n
+ \cdots c_1x_1 + c_0$ where each $c_i$ is a rational constant.  We
specifically restrict our attention to linear, rational, multivariate
polynomials.

\begin{equation*}
    P \equiv c_nx_n + \cdots c_1x_1 + c_0
\end{equation*}

The largest $k$ for which $c_k$ is non-zero is called the dimension of
a polynomial. We sometimes write $P_k$ to explicitly identify the
dimension of a polynomial, $\dim(P_k) = k$.  Note that constants have
dimension 0.

A vector is a sequence of values ordered by dimension.  Given a vector
$\vec{v} = v_n, \ldots, v_1$, we write $E[\vec{v}]$ to denote the
evaluation of an expression $E$ where each $x_i$ in that expression is
replaced by $v_i$.  We assume that every variable has an associated
value in the vector and consider only \emph{consistent vectors},
vectors where each dimension's value is consistent with it's
corresponding variable's type, either rational or integral.  Constants
are self-evaluating and the evaluation operation distributes over the
standard operations in the expected ways.

The atoms in our formulae are linear constraints.  A linear constraint
(L) is an equality or inequality over polynomials.  We will often
write $\prec$ to stand for either $<$ or $\leq$ and $\succ$ to stand
for either $>$ or $\geq$.

\begin{equation*}
    L \equiv P_i = P_j \;|\; P_i > P_j  \;|\; P_i \geq P_j \;|\; P_i < P_j  \;|\; P_i \leq P_j
\end{equation*}

Logical formulae (F) are defined over linear constraints using
conjunction, disjunction, and negation.

\begin{equation*}
    F \equiv F \band F \;|\; F \bor F \;|\; \bnot F \;|\; L
\end{equation*}

We define a variable \emph{bound} (B) as a linear constraint on a single
variable.  The dimension of a variable bound is the dimension of the
bound variable.  We say that a variable bound is \emph{normalized} if the
dimension of the variable is greater than the dimension of the
bounding polynomial, $n > m$.

\begin{equation*}
    B_n \equiv (x_n \prec P_m) \;|\; (x_n \succ P_m) \;|\; (x_n = P_m)
\end{equation*}

We define a \emph{trapezoid} (T) as a (possibly empty) set of variable
bounds.  A trapezoid can be interpreted logically as the conjunction
of all of its variable bounds or geometrically as the intersection of
the volumes defined by the planes bounding each variable.  We say that
a vector \emph{satisfies} a trapezoid if it is a consistent vector and
the logical interpretation of the trapezoid evaluates to True at the
vector.  A trapezoid is \emph{satisfiable} if a consistent vector
exists for which it evaluates to True.  Equivalently, we can say that
a trapezoid \emph{contains} a vector if the point defined by that
vector resides inside of the volumes defined by its geometric
interpretation.  A satisfiable trapezoid contains at least one point.
Below is a grammar for trapezoids expressed in terms of the
intersection of variable bounds.

\begin{equation*}
    T \equiv \ B \gand T \;|\; B \;|\; \varnothing
\end{equation*}

In addition the above syntactic restrictions, we also require that the
constituent bounds of a trapezoid satisfy the following three
properties:

\begin{enumerate}
  \item All variable bounds must be normalized
  \item The set of bounds must be simultaneously satisfiable as witnessed by a consistent reference vector $\vec{v}$
  \item Each $x_n$ is either unbound or is bound by either a single equality or at most one
        upper bound and at most one lower bound.
\end{enumerate}

\begin{wrapfigure}{r}{0.4\textwidth}
  \centering
  \includegraphics[width=0.4\textwidth]{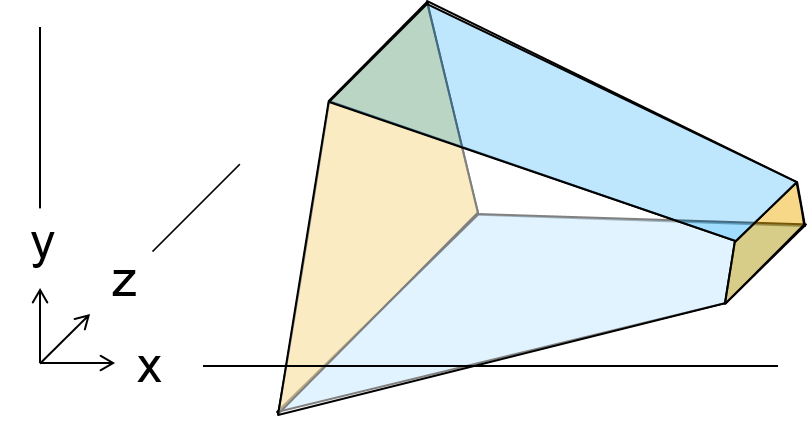}
  \caption{3-d Trapezoidal Volume}
\end{wrapfigure}

We call this data structure a trapezoid because, for every dimension
greater than one, the bounds on the variable of that dimension are, in
general, linear relations expressed in terms of the preceding
variable.  Holding all other variables constant, such linear bounds
describe trapezoidal regions; hence the name.


The data structure we use to represent
generalizations is a \emph{region}.  A region is defined as either a
trapezoid or the complement of a trapezoid:

\begin{equation*}
    R \equiv \{ T \} \;|\; \gnot \{ T \}
\end{equation*}

\section{Generalization}
\label{sec:Generalization}

An ACL2 proof that a low-level implementation of the following
generalization procedure satisfies our notion of generalization
correctness is provided in the supporting materials accompanying this
paper.  The following presentation provides an overview of the
generalization process but limits discussion of correctness to
observations concerning key steps in this proof along with references
to the supporting materials.

\subsection{Generalization Correctness}
\label{sec:GeneralizationCorrectness}

Given a formula $F$ and a consistent reference vector $\vec{v} = v_n, \ldots, v_1$, the
objective is to compute a region $R$ which correctly generalizes $\vec{v}$
with respect to $F$. We say that a generalization is correct if it
satisfies the following properties:

\begin{itemize}
    \item\textbf{Invariant 1}$\ \ F[\vec{v}] = R[\vec{v}]$
    \item\textbf{Invariant 2.a} If $F[\vec{v}]$, then $\forall \vec{w}.~ R[\vec{w}] \Rightarrow F[\vec{w}]$
    \item\textbf{Invariant 2.b} If not $F[\vec{v}]$, then $\forall \vec{w}.~ F[\vec{w}] \Rightarrow R[\vec{w}]$
\end{itemize}

The first invariant ensures that the reference vector is contained in
the generalization iff the vector satisfies the original formula.  The second
two invariants ensure that the generalized result is a
\emph{conservative under-approximation} of the original formula.
Together they guarantee that the state space that contains the
reference vector in the generalized result is a subset of the state
space that contains the reference vector in the original formula.

\subsection{Generalization Process}

The generalization process involves transforming a reference vector
and a logical formula into a trapezoidal region.  Generalizing
relative to linear formula (Section~\ref{sec:LinearFormulae}) begins
by transforming linear relations into regions
(Section~\ref{sec:LinearRelations}) and it proceeds by alternately
intersecting and complementing the resulting regions
(Section~\ref{sec:RegionOperations}).  The process of intersecting
regions may further involve generalizing the intersection of sets of
linear bounds into trapezoids (Section~\ref{sec:BoundIntersections}).
The rules for each of these operations is presented in a bottom-up
fashion.

\subsection{Generalizing Linear Relations}
\label{sec:LinearRelations}

We use $\normp{L} \vto R$ to represent the generalization of a linear
relation ($L$) into a trapezoidal region ($R$), a process which may
involve several steps.  First, arbitrary linear relations between two
polynomials can be expressed as linear relations between a polynomial
and zero.

\begin{align*}
    \normp{P_i \prec P_j} \vto \normp{P_i - P_j \prec 0} 
    \hspace{8pt}
    \normp{P_i \succ P_j} \vto \normp{P_j - P_i \prec 0} 
    \hspace{8pt}
    \normp{P_i = P_j}     \vto \normp{P_i - P_j = 0}
\end{align*}

Constant linear relations normalize into either the true (empty)
trapezoid or its complement.  

\begin{align*}
    \normp{c_0 \prec 0} &\vto \ \{\}    \hspace{10pt} \mbox{if $c_0 \prec 0$}
    \hspace{55pt}
    \normp{c_0 = 0}     \vto \ \{\}     \hspace{10pt} \mbox{if $c_0 = 0$}\\
    \normp{c_0 \prec 0} &\vto \gnot\{\} \hspace{10pt} \mbox{if $\bnot (c_0 \prec 0)$}   
    \hspace{35pt}
    \normp{c_0 = 0}     \vto \gnot\{\}  \hspace{10pt} \mbox{if $\bnot (c_0 = 0)$}   
\end{align*}

Non-constant linear relations between polynomials and zero are
normalized so that they relate the variable with the largest dimension
to a polynomial consisting only of smaller variables and constants.

\begin{align*}
    \normp{P_n <    0} &\vto \normp{x_n < -(P_n/c_n - x_n)}    \hspace{35pt} \mbox{if $c_n > 0$} \\
    \normp{P_n <    0} &\vto \normp{-(P_n/c_n - x_n) < x_n}    \hspace{35pt} \mbox{if $c_n < 0$} \\
    \normp{P_n \leq 0} &\vto \normp{x_n \leq -(P_n/c_n - x_n)} \hspace{35pt} \mbox{if $c_n > 0$} \\
    \normp{P_n \leq 0} &\vto \normp{-(P_n/c_n - x_n) \leq x_n} \hspace{35pt} \mbox{if $c_n < 0$} \\
    \normp{P_n = 0}    &\vto \normp{x_n = -(P_n/c_n - x_n)}    \hspace{35pt} \mbox{}               
\end{align*}

Normalized linear relations that are true at the reference vector
simply become singleton trapezoidal regions.

\begin{align*}
    \normp{x_n \prec P}   \vto \{(x_n \prec P  )\}  \hspace{35pt} \mbox{if $x_n[\vec{v}] \prec P[\vec{v}]$}  \\
    \normp{P   \prec x_n} \vto \{(P   \prec x_n)\}  \hspace{35pt} \mbox{if $P[\vec{v}] \prec x_n[\vec{v}]$}  \\
    \normp{x_n = P}       \vto \{(x_n =     P  )\}  \hspace{35pt} \mbox{if $x_n[\vec{v}] = P[\vec{v}]$}        
\end{align*}

A normalized linear relation that evaluates to false at the reference
vector, however, is expressed as a negated trapezoidal region
containing a single linear relation that is true at the reference
vector.

\begin{align*}
    \normp{x_n < P   } &\vto \gnot \{(P   \leq x_n)\} \hspace{35pt} \mbox{if $\bnot(x_n[\vec{v}] < P[\vec{v}]   )$}  \\
    \normp{x_n \leq P} &\vto \gnot \{(P   <    x_n)\} \hspace{35pt} \mbox{if $\bnot(x_n[\vec{v}] \leq P[\vec{v}])$}  \\
    \normp{P < x_n   } &\vto \gnot \{(x_n \leq P  )\} \hspace{35pt} \mbox{if $\bnot(x_n[\vec{v}] > P[\vec{v}]   )$}  \\
    \normp{P \leq x_n} &\vto \gnot \{(x_n <    P  )\} \hspace{35pt} \mbox{if $\bnot(x_n[\vec{v}] \geq P[\vec{v}])$}  \\
    \tag{neq.1}\label{eq:neq1}
    \normp{x_n = P   } &\vto \gnot \{(P   < x_n)\}    \hspace{35pt} \mbox{if $     (x_n[\vec{v}] > P[\vec{v}]   )$}  \\ 
    \tag{neq.2}\label{eq:neq2}
    \normp{x_n = P   } &\vto \gnot \{(x_n < P  )\}    \hspace{35pt} \mbox{if $     (x_n[\vec{v}] < P[\vec{v}]   )$}  
\end{align*}

Normalizing linear relations in this way ensures that every variable
bound contained in the body of a trapezoid is true at the reference
vector.  It also ensures that regions that contain the reference
vector simplify into simple trapezoids while those that do not are
expressed as the complement of a trapezoid that does.

With the exception of rules \ref{eq:neq1} and \ref{eq:neq2}, the rules
for generalizing linear relations ensure that the generalization
is equal to the original formula (i.e. $ \forall w.~ F[\vec{w}] =
L[\vec{w}]$), trivially satisfying our correctness
invariants. Crucially, rules \ref{eq:neq1} and \ref{eq:neq2} do
satisfy Invariants 1 and 2.b (and, trivially, 2.a) as expressed in
Section~\ref{sec:GeneralizationCorrectness}, ensuring that, in all
cases, our generalization of linear constraints is correct.  See the
functions \texttt{normalize-equal-0} and \texttt{normalize-gt-0} and
their associated lemmas in the file \texttt{top.lisp} of the
supporting materials.

\subsection{Generalizing Bound Intersections}
\label{sec:BoundIntersections}

Let $C$ be a set of normalized bounds known to be satisfiable as
witnessed by the consistent reference vector $\vec{v}$.  The relation $C \vtos T$
represents the fixed-point of intersecting and reducing the various
bounds contained in $C$ into a trapezoidal region.  The following
collection of rules govern the individual intersections of the members
of such a set of bounds to produce a trapezoid.  The rules of the
rewrite system operate on an un-ordered set of bounds, allowing for
arbitrary reordering of the bounds.  The key to this process are rules
for eliminating multiple bounds on the same variable.  The rules for
eliminating multiple upper bounds are shown below. The rules for
multiple lower bounds are symmetric.

\begin{align}
    (x_n < P)    \gand (x_n < Q)    \vto (x_n < P)    \gand \normp{P \leq Q} \hspace{35pt} \mbox{if $P[\vec{v}] \leq Q[\vec{v}]$}  \label{lt-int-1}\tag{lt-int.1} \\
    (x_n < P)    \gand (x_n \leq Q) \vto (x_n < P)    \gand \normp{P \leq Q} \hspace{35pt} \mbox{if $P[\vec{v}] \leq Q[\vec{v}]$}  \label{lt-int-2}\tag{lt-int.2} \\
    (x_n \leq P) \gand (x_n < Q)    \vto (x_n \leq P) \gand \normp{P <    Q} \hspace{35pt} \mbox{if $P[\vec{v}] < Q[\vec{v}]$}     \label{lt-int-3}\tag{lt-int.3} \\
    (x_n \leq P) \gand (x_n \leq Q) \vto (x_n \leq P) \gand \normp{P \leq Q} \hspace{35pt} \mbox{if $P[\vec{v}] \leq Q[\vec{v}]$}  \label{lt-int-4}\tag{lt-int.4}
\end{align}

Note that while $\normp{P \prec Q}$ normalizes the linear relation as
described in Section~\ref{sec:LinearRelations} this bound will never
evaluate to false due to the rules' side-conditions.
Consequently the result will always be a simple trapezoid which is
either empty (if both bounds are constant) or contains a single
normalized bound on a variable whose dimension is less than that of
$x_n$.  Similar observations apply to the rules for simplifying bounds
involving equality, as shown below.  This process and it's correctness
is captured by the function
\texttt{andTrue\--variableBound\--variableBound} and its associated
lemmas (generated by the \texttt{def::trueAnd} macro) in the file
\texttt{poly-proofs.lisp} of the supporting materials.

\begin{align*}
    (x_n = P) \gand (x_n \prec Q) \vto (x_n = P) \gand \normp{P \prec Q} \tag{eq-int.1}\label{eq-int-1}\\
    (x_n = P) \gand (Q \prec x_n) \vto (x_n = P) \gand \normp{Q \prec P} \tag{eq-int.2}\label{eq-int-2}\\
    (x_n = P) \gand (x_n = Q)     \vto (x_n = P) \gand \normp{P = Q}     \tag{eq-int.3}\label{eq-int-3}
\end{align*}

\begin{lemma}
The intersection rules are terminating.
\end{lemma}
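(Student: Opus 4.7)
The plan is to exhibit a well-founded measure on sets of bounds that strictly decreases at every rule application. To each intermediate set $C$ of normalized bounds I would assign the multiset $\mu(C)$ of the dimensions of its members, and order such multisets by the Dershowitz--Manna multiset extension of $<$ on $\mathbb{N}$, which is well-founded because $<$ is well-founded on $\mathbb{N}$.

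Next I would analyse the common shape of the rules. Each of (\ref{lt-int-1})--(\ref{lt-int-4}), their symmetric analogues for lower bounds, and (\ref{eq-int-1})--(\ref{eq-int-3}) consumes two bounds on a common variable $x_n$ and produces one bound on $x_n$ together with a fresh constraint of the form $\normp{P \prec Q}$ or $\normp{P = Q}$. Because the input bounds are normalized, both $P$ and $Q$ have dimension strictly less than $n$, so $P - Q$ has dimension $< n$ as well. Inspecting the rewrite rules of Section~\ref{sec:LinearRelations}, $\normp{\cdot}$ then yields either the empty trapezoid $\{\}$ (in the constant sub-case) or a singleton trapezoid whose single normalized bound is on some variable $x_m$ with $m < n$. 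The complementary $\gnot\{\}$ outcome is excluded because each rule's side condition guarantees that the freshly generated constraint already holds at the reference vector $\vec{v}$.

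Putting these observations together, every rule application replaces two occurrences of $n$ in $\mu(C)$ with one occurrence of $n$ plus at most one occurrence of some $m < n$, a strict decrease in the multiset ordering, so termination follows. The only genuine obstacle I anticipate is the bookkeeping needed to confirm that the generated constraint never lands at dimension $\geq n$ and never collapses to $\gnot\{\}$; both facts rest on the normalization invariant for the bounds in $C$ and on the assumed satisfiability of $C$ at $\vec{v}$.
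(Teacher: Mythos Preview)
Your proof is correct and follows the same overall strategy as the paper: exhibit a well-founded measure on the set of bounds that strictly decreases under every rule application, using the key observation that the freshly generated constraint $\normp{P \prec Q}$ (or $\normp{P = Q}$) has dimension strictly below $n$ because both $P$ and $Q$ do. The paper, however, uses the simpler measure given by the \emph{sum} of the dimensions of the bounds rather than the Dershowitz--Manna multiset extension of $<$; this suffices because each rule consumes two bounds of dimension $n$ and produces at most two bounds---one still of dimension $n$ and at most one of strictly smaller dimension---so the sum drops by at least one. Your multiset argument is strictly more powerful (it would continue to work even if a rule replaced one bound by arbitrarily many of smaller dimension), and your explicit handling of the constant and $\gnot\{\}$ sub-cases of $\normp{\cdot}$ is more careful than the paper's sketch, but the extra machinery is not needed for the rules as stated.
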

\begin{proof}
Define a measure on the conjunction of bounds to be the sum of the
dimension of each bound. Each rule of the intersection rewrite system
reduces this measure. For example, consider rule \ref{lt-int-1}. We
know $dim(x) > dim(P)$ and $dim(x) > dim(Q)$, and thus $dim(x) >
dim(\langle P \leq Q\rangle)$ and in particular $dim(x < Q) >
dim(\langle P \leq Q\rangle)$. Since the measure is always
non-negative, the intersection rules are terminating.  See the
\texttt{def::total} event for the function \texttt{intersect} in the
file \texttt{intersection.lisp} of the supporting materials for a proof
of termination for ordered (not un-ordered, see Section~\ref{sec:OrderedTrapezoids}) bounds.
\end{proof}

\begin{lemma}
When run to completion, the intersection rules result in a trapezoid.
\end{lemma}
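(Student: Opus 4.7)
The plan is to leverage the preceding termination lemma so that we can speak of a well-defined fixed point of the rewrite system, and then to verify at that fixed point each of the three defining properties of a trapezoid in turn: that every bound is normalized, that the whole set is simultaneously satisfied by $\vec{v}$, and that each variable is either unbound, carries a single equality, or carries at most one upper and at most one lower bound.

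First I would argue, by a routine rule-by-rule case analysis, that the first two invariants are preserved by every intersection step. For normalization: each rule leaves its retained bound on $x_n$ untouched, and any newly emitted bound is produced through $\normp{\cdot}$, which Section~\ref{sec:LinearRelations} already guarantees returns a normalized region. For satisfiability at $\vec{v}$: the side condition on each rule (such as $P[\vec{v}] \leq Q[\vec{v}]$ in rule \ref{lt-int-1}, or the analogous conditions in the lower-bound and eq-int families) is precisely the statement that the newly introduced bound evaluates to True at $\vec{v}$, so the reference vector continues to witness simultaneous satisfaction after the step. Since the hypothesis of the lemma gives us a set $C$ already satisfied by $\vec{v}$, induction on the number of rewrite steps carries both invariants through to the fixed point.

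The substantive content of the lemma is the third invariant, and this follows from what it means to be a fixed point. Suppose for contradiction that the terminal set contained two upper bounds on the same variable $x_n$. Then for any total ordering of $P[\vec{v}]$ and $Q[\vec{v}]$ one of rules \ref{lt-int-1}--\ref{lt-int-4} applies (the side conditions split on $\leq$ versus $<$ in a way that jointly covers every orientation, swapping the two bounds if needed), contradicting termination at this state. The symmetric lower-bound rules rule out two lower bounds, and the three eq-int rules rule out an equality coexisting with any other bound on the same variable. Consequently, at completion each variable has exactly the shape allowed by the trapezoid grammar.

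The main obstacle I anticipate is bookkeeping the new bounds that $\normp{P \prec Q}$ emits at strictly smaller dimensions: an intersection step on $x_n$ can inject a fresh bound on some $x_m$ with $m < n$, and that fresh bound may in turn trigger another intersection rule at dimension $m$, so the argument must be made uniformly rather than one dimension at a time. The termination lemma already bounds this cascade, and since the fixed-point argument in the previous paragraph applies at every dimension simultaneously, the per-variable shape constraint holds across all dimensions at once. Combined with the preserved normalization and satisfiability, the terminal set satisfies all three properties, so it is a trapezoid in the sense of the definition, as required.
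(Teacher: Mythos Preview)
Your proposal is correct and follows essentially the same approach as the paper: argue that the first two trapezoid properties are maintained, and then observe that at a fixed point no two same-type bounds on a single variable can coexist because one of the intersection rules would still apply. The paper's version is terser---it takes preservation of normalization and satisfiability as essentially given and concentrates on the coverage argument---and it specifically flags the one edge case you fold into ``swapping the two bounds if needed'': rule~\ref{lt-int-3} carries the strict side condition $P[\vec{v}] < Q[\vec{v}]$, so when $P[\vec{v}] = Q[\vec{v}]$ it does not fire, but rule~\ref{lt-int-2} with the roles of the two bounds exchanged does. Your explicit invariant-preservation paragraph is a welcome addition the paper leaves implicit.
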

\begin{proof}
The intersection rules apply to a set of bounds that are normalized
and satisfiable.  To qualify as a trapezoid the only missing
requirement is that there may be multiple upper, lower, and equality
bounds on each variables. For multiple upper bounds, note that the 4
rules above cover all cases. Although rule \ref{lt-int-3} does not
cover the case when $P[\vec{v}] = Q[\vec{v}]$, that case will in fact
be covered by rule \ref{lt-int-2} since then $Q[\vec{v}] \leq
P[\vec{v}]$. The case of multiple lower bounds is
symmetric. Therefore, the rewrite rules will eliminate all multiple
and lower and upper bounds so that the result is a trapezoid.  See
lemma \texttt{trapezoid-p-intersect} in the file
\texttt{intersection.lisp} of the supporting materials.
\end{proof}

While, in general, the the fixed-point resulting from intersecting and
reducing a set of bounds will differ from the original set, each rule
that we apply to perform this normalization satisfies Invariants 1 and
2.a (and trivially 2.b) as expressed in
Section~\ref{sec:GeneralizationCorrectness}, ensuring the correctness
of this process.  

\subsection{Generalizing Region Intersections and Complements}
\label{sec:RegionOperations}

The relation $R^a \cap R^b \vint R^c$ indicates that region $R^c$ is a
generalized intersection of regions $R^a$ and $R^b$ relative to $\vec{v}$
and $\gnot R^a \vint R^b$ to indicate that $R^b$ is the generalized
complement of $R^a$.  We characterize the behavior of region
intersection and complement with the following set of rules:

\begin{align*}
    \infer[\mbox{\textsc{comp}}]
          {\gnot{\gnot{\{T\}}} \vint \{T\}}
          {}
    \hspace{35pt}
    \infer[\mbox{\textsc{tint}}]
          {\{T^a\} \cap \{T^b\} \vint \{T^c\}}
          {T^a \cap T^b \vtos T^c}\\
    \\
    \infer[\mbox{\textsc{cint.1}}]
          {\gnot{\{T\}} \cap R \vint \gnot{\{T\}}}
          {}
    \hspace{35pt}
    \infer[\mbox{\textsc{cint.2}}]
          {R \cap \gnot{\{T\}} \vint \gnot{\{T\}}}
          {}
\end{align*}

While perhaps surprising, the rules \mbox{\textsc{cint.1}} and
\mbox{\textsc{cint.2}} not only simplify the region intersection
process, they are also both consistent with and essential for
preserving generalization correctness (specifically, Invariant 2.b) as
described in Section~\ref{sec:GeneralizationCorrectness}.  See the
functions \texttt{and-regions} and \texttt{not-region} and their
associated lemmas in the file \texttt{top.lisp} from the supporting
materials.

\subsection{Generalizing Linear Formulae}
\label{sec:LinearFormulae}

We write $F \vgen R$ to mean that region $R$ generalizes $\vec{v}$
with respect to formula $F$. In general, conjunction generalizes into
region intersection, negation generalizes into region complement, and
disjunction generalizes into a complement of the intersection of the
complements of the generalized arguments.  The generalization relation
is defined inductively over the structure of the formula $F$.

\begin{align*}
    \infer[\mbox{\textsc{gen-atom}}]
          {L \vgen R}
          {\normp{L} \vto R}
    \hspace{35pt}
    \infer[\mbox{\textsc{gen-and}}]
          {F_1 \band F_2 \vgen R_1 \gand R_2}
          {F_1 \vgen R_1 & F_2 \vgen R_2}
\end{align*}
\begin{align*}
    \infer[\mbox{\textsc{gen-not}}]
          {\bnot F \vgen \gnot R}
          {F \vgen R}
    \hspace{35pt}
    \infer[\mbox{\textsc{gen-or}}]
          {F_1 \bor F_2 \vgen \gnot(\gnot R_1 \gand \gnot R_2)}
          {F_1 \vgen R_1 & F_2 \vgen R_2}
\end{align*}

This procedure can be shown to satisfy our correctness invariants by
induction over the structure of a linear formula and by appealing to
the correctness of the generalization of linear relations and region
intersection.  See the definition \texttt{generalize-ineq} and the
theorems \texttt{inv1-generalize-ineq} and
\texttt{inv2-generalize-ineq} in the file \texttt{top.lisp} of the
supporting materials.

\section{Sampling}
\label{sec:Sampling}

\emph{Sampling} is the process of identifying randomized, concrete,
type-consistent variable assignments (vectors) that satisfy all of the variable bounds in a
trapezoid \footnote{We focus on trapezoids because solving a trapezoid
  complement is trivial.  The complement of a trapezoid is merely a
  disjunction of negated linear bounds.  An assignment that satisfies
  any one of the negated bounds satisfies the trapezoid
  complement.}. We call this process sampling, rather than solving, to
emphasize the fact that it is intended to be randomized, simple, and
efficient.  Indeed, the structure of a trapezoid lends itself to a
simple and efficient sampling algorithm.  The trapezoidal structure
ensures that the variable with the smallest dimension is bounded only
by constants\footnote{Technically, variables may also be unbound.  In
  such case arbitrary value assignments suffice.}.  Sampling of the
smallest dimension variable therefore involves simply choosing a value
consistent with its type and constant bounds.  Because each normalized variable
bound is expressed strictly in terms of variables of smaller
dimension, sampling the variables by proceeding from the smallest to
the largest dimension guarantees that the bounds on each subsequent
variable will evaluate to constants relative to variables already
assigned.  Sampling each variable from the smallest to the largest dimension,
therefore, involves nothing more than choosing a value for each
variable consistent with type and constant bounds computed based on the
previous variable assignments.

While the above algorithm is certainly simple and efficient, the
trapezoidal data structure as defined is not sufficient to guarantee
that every variable assignment that satisfies the first $N$ bounds
will satisfy the $N+1^{th}$ bound.  A given set of variable
assignments may result in an upper bound that is smaller than a
lower bound or, more subtly, an integer variable may be constrained by
bounds that do not permit an integral solution.  While backtracking
and attempting different sets of variable assignments is possible in
such cases, it can also be very inefficient.

To address this concern we present an overview of a post-processing
step for trapezoids, performed before sampling, that results in a
restricted trapezoid with properties that ensure that the simple and
efficient sampling algorithm presented above will never encounter a
bound that cannot be satisfied (even for integer variables) and thus
never needs to backtrack.  While the following procedure has not been
mechanically verified, we sketch selected aspects of the informal
proof with the long term objective of formalizing and verifying its
correctness in ACL2.

\subsubsection{Ordered Trapezoids}
\label{sec:OrderedTrapezoids}

To better explain our post-processing approach we first introduce a
more refined representation for trapezoids, beginning with the concept
of variable \emph{intervals}.  A variable interval is either a single
variable bound or a pair of bounds consisting of an upper and lower
bound on the same variable.  The dimension of an interval is simply
the dimension of the bound variable.  Variable intervals allow us to
gather all of the linear bounds relevant to each variable in one
place.

\begin{equation*}
    I_n \equiv B_n \;|\; (P^L \prec x_n) \gand (x_n \prec P^U)
\end{equation*}

Variable intervals can be used to construct ordered trapezoids.  An
ordered trapezoid is simply a sequence of variable intervals organized
in descending order based on dimension, terminating in the empty
trapezoid.  Note that these new syntactic constructs do not change in
any way the nature of trapezoids other than to impose additional ordering
constraints on their representation.

\begin{align*}
    T_n &\equiv \ I_n \gand T_m \ \ \mbox{where $n > m$} \\
    T_0 &\equiv \varnothing
\end{align*}

Using ordered trapezoids, we can define the process for sampling a
trapezoid, $\epsilon(T)$ to produce a consistent vector that satisfies
the trapezoid.  Note that $\epsilon(I_n[\vec{w}])$ represents a random
choice of a value consistent in type with variable $n$ and satisfying
the interval $I_n$ evaluated at the vector $\vec{w}$ (assuming such a
value exists), $\{\}$ represents an empty vector, and $\vec{w}[i] = x$
represents an update of vector $\vec{w}$ such that dimension $i$ is
equal to the value $x$.

\begin{align*}
    \infer[]
          {\epsilon(\varnothing) \rightarrow \{\}}
          {}
    \hspace{35pt}
    \infer[]
          {\epsilon(I_n \gand T_m) \rightarrow \vec{w}[n] = \epsilon(I_n[\vec{w}])}
          {\epsilon(T_m) \rightarrow \vec{w}}
\end{align*}

\subsection{Restriction}

The purpose of post-processing is to restrict the domain of values that
can be assigned to selected variables in the trapezoid to ensure that
the sampling process can proceed without backtracking.  We call the
trapezoidal post-processing step \emph{restriction}.  The restriction
process takes as input a trapezoid, a change of basis ($\sigma$, initially
an identity function), and a consistent reference vector.  The output of the
restriction process is a new (restricted) trapezoid and a change of
basis that maps vectors sampled from the restricted trapezoid back
into the vector space of the original trapezoid.

\begin{align*}
    T,\,\sigma,\,\vec{v} \overset{\textit{R}}\longrightarrow T^{\prime},\,\sigma^{\prime}
\end{align*}

The restriction process is specified as a single pass over the
trapezoid intervals, starting with the largest interval and proceeding
to the smallest.  Each interval in the trapezoid is restricted by
applying the steps summarized below.

\begin{itemize}
  \item\textbf{Bound Fixing (BF)} Ensures that linear bounds are expressed
    only in terms of $\geq$ and $\leq$.
  \item\textbf{Integer Equality (IE)} Ensures that
    equalities involving integer variables are always satisfiable.
  \item\textbf{Interval Restriction (IR)} Ensures that
    intervals are always satisfiable.
\end{itemize}

Each interval in the trapezoid may suggest a domain
restriction, in the form of a set of linear constraints, and a change
of basis in the form of a linear transformation.  Domain restrictions
are intersected with the remaining trapezoid before it, too, is
restricted.  Changes of base are accumulated and applied to the
current interval, the remaining trapezoid, and the reference vector.
When complete, the final restricted trapezoid and the final change of
basis are returned.  

\begin{align*}
    \infer[\mbox{\sc restriction-base}]
          {\varnothing,\,\sigma,\,\vec{v} \overset{\textit{R}}\longrightarrow \varnothing,\,\sigma}
          {}
\end{align*}

\begin{align*}
    \infer[\mbox{\sc restriction-step}]
          {I_n \gand T_m,\,\sigma,\,\vec{v} \overset{\textit{R}}\longrightarrow I^{\primeIII}_n \gand T^{\primeIV}_m,\,\sigma^{\primeIII}}
          {\begin{aligned}
              I_n \gand T_m,\,\sigma,\,\vec{v}                                      &\overset{\textit{BG}}\longrightarrow I^{\prime}_n \gand T^{\prime}_m,\,\sigma^{\prime},\,{\vec{v}\,}^{\prime}\\
              I^{\prime}_n \gand T^{\prime}_m,\,\sigma^{\prime},\,{\vec{v}\,}^{\prime}      &\overset{\textit{IE}}\longrightarrow I^{\primeII}_n \gand T^{\primeII}_m,\,\sigma^{\primeII},\,{\vec{v}\,}^{\primeII}\\
              I^{\primeII}_n \gand T^{\primeII}_m,\,\sigma^{\primeII},\,{\vec{v}\,}^{\primeII}    &\overset{\textit{IR}}\longrightarrow I^{\primeIII}_n \gand T^{\primeIII}_m,\,\sigma^{\primeIII},\,{\vec{v}\,}^{\primeIII}\\
                                 T^{\primeIII}_m,\,\sigma^{\primeIII},\,{\vec{v}\,}^{\primeIII} &\overset{\textit{R}}\longrightarrow T^{\primeIV}_m,\,\sigma^{\primeIV}
            \end{aligned}
           }
\end{align*}

The restricted trapezoid can then be sampled using our simple sampling
algorithm and the resulting vectors can be mapped back into the
original vector space using the change of basis in such a way that
$\sigma^{\prime}[\epsilon(T^{\prime})]$ is a type consistent vector
and $T[\sigma^{\prime}[\epsilon(T^{\prime})]]$ is true.

The following discussion assumes that all integer valued variables are
bounded only by other integer valued variables.  This can be ensured
if, say, the dimension of each rational variable is always greater
than the dimension of every integer variable.

\subsection{Bound Fixing (BF)}

Bound fixing transforms exclusive inequalities on integer variables
into inclusive inequalities and tightens any remaining integer bounds.

\begin{align*}
    \infer[\mbox{\sc bound-fixing}]
          {I_n \gand T_m,\,\sigma,\,\vec{v} \overset{\textit{BF}}\longrightarrow I^{\prime}_n \gand T_m,\,\sigma,\,\vec{v}}
          {BF(I_n) \rightarrow I^{\prime}_n}
\end{align*}

We omit a more detailed description of this operation for space considerations.

\subsection{Integer Equality (IE)}

Any integer variable equality bound
appearing in a trapezoid can be expressed as:

\begin{align*}
  x_n = (N_k*x_k + \ldots + N_0)/D_n
\end{align*}

where all $x_i$ and $N_i$ are integer valued and $D_n$ is a positive
integer, representing the least common denominator of the bounding
polynomial coefficients.  Given $a, b \in \mathbb{Z}$ we write $(a | b)$, called
``$a$ divides $b$'', if and only if there exists $c \in \mathbb{Z}$
such that $b = a*c$. A divisibility constraint is an expression of the
form $(m | E)$ where $m \in \mathbb{Z}$ and $E \in \mathcal{E}$. A
vector $\vec{v}$ is called a {\em solution} to a divisibility constraint $(m
| E)$ if and only if $(m | E[\vec{v}])$. A divisibility constraint is called
solvable if and only if it has at least one solution.  To ensure that
$x_n$ has an integer solution, $\vec{w}$ must be chosen so that $(D_n |
(N_k*x_k + \ldots + N_0)[\vec{w}]) $.  Of course if $D_n$ is 1, this is
trivial and no additional work is required.

Given a non-trivial, solvable divisibility constraint $(m | E)$ we want
to find an efficient way to enumerate all solutions.  Towards this
end, we introduce the concept of a {\em Trapezoidal Change of Basis} (TCOB):
a variable substitution that preserves dimension.

\begin{definition}[Trapezoidal Change of Basis]
A map $\sigma : \mathcal{E} \to \mathcal{E}$ is called an trapezoidal change of basis, if
\begin{enumerate}
\item $\sigma(c_0 + c_1x_1 + \cdots + c_nx_n) = c_0 + c_1\sigma(x_1) + \cdots + c_n\sigma(x_n)$, i.e. $\sigma$ is a homomorphism, and
\item $\dim(\sigma(x_i)) = \dim(x_i)$, for all $i = 1, \ldots, n$.
\end{enumerate}
\end{definition}

For a solvable divisibility constraint $(m | E)$, we propose to
construct a TCOB $\sigma$ such that 1) every vector is a solution to
the divisibility constraint $(m | \sigma(E))$ and 2) each solution to $(m
| E)$ is represented by a solution to $(m | \sigma(E))$.




We define the notion of {\em span} to clarify when exactly a TCOB preserves a solution:


\begin{definition}
Let $\sigma$ be a trapezoidal change of basis defined by $\sigma(x_i)
= E_i$ for each $i = 1, \ldots, n$. Let $\oldvector$ and $\newvector$ be vectors. We say
$\oldvector = \sigma[\newvector]$ if and only if $\oldvector = (E_1[\newvector], \ldots, E_n[\newvector])$. We
define $\ispan(\sigma) = \{ \oldvector : \exists \newvector.~ \oldvector = \sigma[\newvector]\}$.
\end{definition}


\begin{lemma}
For every $E \in \mathcal{E}$ we have $E[\sigma[\newvector]] = \sigma(E)[\newvector]$.
\end{lemma}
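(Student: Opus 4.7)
The plan is to prove this by direct calculation, exploiting the fact that $E$ is a linear polynomial together with the homomorphism property of $\sigma$. Essentially, evaluation of a linear polynomial and the action of a linear homomorphism are both determined by what they do on the variables $x_i$, so the identity reduces to noting that $(\sigma[\newvector])_i = E_i[\newvector] = \sigma(x_i)[\newvector]$.

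More concretely, first I would write $E$ in its general linear form $E \equiv c_0 + c_1 x_1 + \cdots + c_n x_n$, which is legitimate because $\mathcal{E}$ consists of linear polynomials. Then I would expand the left-hand side: by the stated distributivity of evaluation over the standard operations, $E[\sigma[\newvector]] = c_0 + c_1 (\sigma[\newvector])_1 + \cdots + c_n (\sigma[\newvector])_n$, and by the definition of $\sigma[\newvector]$ this equals $c_0 + c_1 E_1[\newvector] + \cdots + c_n E_n[\newvector]$.

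Next I would expand the right-hand side: by the homomorphism property of $\sigma$, $\sigma(E) = c_0 + c_1 \sigma(x_1) + \cdots + c_n \sigma(x_n) = c_0 + c_1 E_1 + \cdots + c_n E_n$. Evaluating at $\newvector$ and using distributivity again gives $\sigma(E)[\newvector] = c_0 + c_1 E_1[\newvector] + \cdots + c_n E_n[\newvector]$, which matches the left-hand side verbatim.

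There is no real obstacle here; the statement is essentially the universal property that characterizes homomorphisms on the polynomial algebra. The only subtle point worth flagging is that the argument uses both conditions of a TCOB only implicitly: the homomorphism clause is used to push $\sigma$ through the polynomial structure of $E$, while the dimension-preservation clause is not needed for this particular lemma (it will matter later when $\sigma$ is composed with bounds appearing in a trapezoid). If a more formal style is preferred, the same argument can be recast as a straightforward structural induction on $E$, with base cases for constants and variables and an inductive step for linear combinations.
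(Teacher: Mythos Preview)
Your proof is correct and matches the paper's approach: the paper simply says ``by induction on the structure of $E$,'' and your direct calculation is exactly that induction unfolded, using the homomorphism clause of a TCOB together with the definition of $\sigma[\newvector]$. Your observation that the dimension-preservation clause is not needed here is also accurate.
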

\begin{proof}
By induction on the structure of $E$.
\end{proof}

\begin{lemma}
\label{tcob}
Let $(m | E)$ be a solvable divisibility constraint. Then there exists a
trapezoidal change of basis $\sigma$ such that
\begin{enumerate}
    \item $\sigma(E) = m E'$ for some $E'$, and
    \item $\forall \oldvector.~ (m | E[\oldvector]) \Longrightarrow \oldvector \in \ispan(\sigma)$.
\end{enumerate}
Note that first condition implies that every vector is a solution to $(m | \sigma(E))$.
\end{lemma}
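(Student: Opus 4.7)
My plan is to prove the lemma by induction on $\dim(E)$, mirroring the dimension-by-dimension construction of a TCOB. For the base case $\dim(E) = 0$, $E = c_0$ is constant, solvability means $m \mid c_0$, and the identity TCOB $\sigma$ satisfies $\sigma(E) = m \cdot (c_0 / m)$, with every vector trivially in $\ispan(\sigma)$.

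For the inductive step, write $E = F + c_n x_n$ with $c_n \neq 0$ and $\dim(F) < n$. Let $d = \gcd(c_n, m)$ and write $c_n = d c_n'$, $m = d m'$, so that $\gcd(c_n', m') = 1$; the extended Euclidean algorithm yields integers $a', b'$ with $a' c_n' + b' m' = 1$. The critical preliminary observation is that $(m \mid E)$ is solvable if and only if $(d \mid F)$ is solvable: for any fixed $v_1,\ldots,v_{n-1}$, the residues $\{c_n v_n \bmod m : v_n \in \mathbb{Z}\}$ form exactly the subgroup $d\mathbb{Z}/m\mathbb{Z}$, so $E[\vec{v}] \equiv 0 \pmod{m}$ is solvable in $v_n$ iff $d \mid F[\vec{v}]$. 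Applying the induction hypothesis to $(d \mid F)$ yields a TCOB $\tau$ on $x_1,\ldots,x_{n-1}$ (extended by the identity on $x_n$) with $\tau(F) = d F'$ and every solution of $(d \mid F)$ contained in $\ispan(\tau)$. I would then define $\rho$ by $\rho(x_i) = x_i$ for $i < n$ and $\rho(x_n) = m' x_n - a' F'$, and set $\sigma = \rho \circ \tau$. This is a TCOB because $m' \neq 0$, so $\dim(\rho(x_n)) = n$.

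Condition 1 follows from a direct calculation: $\sigma(E) = \rho(d F' + c_n x_n) = d F'(1 - a' c_n') + d c_n' m' x_n = m(b' F' + c_n' x_n)$, using $1 - a' c_n' = b' m'$. For condition 2, given a solution $\vec{v}$ of $(m \mid E)$, its first $n-1$ components solve $(d \mid F)$, so by induction there exist $\eta_1,\ldots,\eta_{n-1}$ with $v_i = \tau(x_i)[\vec{\eta}]$. Taking $\eta_n = (v_n + a' F'[\vec{\eta}])/m'$, one gets $\sigma(x_n)[\vec{\eta}] = m' \eta_n - a' F'[\vec{\eta}] = v_n$ by construction, so $\vec{v} = \sigma[\vec{\eta}]$. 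I expect the main obstacle to be verifying that $\eta_n$ is an integer. This requires combining Bézout's identity $a' c_n' + b' m' = 1$ with the congruence $F[\vec{v}] \equiv -c_n v_n \pmod{m}$ inherited from $E[\vec{v}] \equiv 0 \pmod{m}$ to compute $d v_n + a' F[\vec{v}] \equiv v_n(d - a' c_n) = v_n \cdot d \cdot b' m' \equiv 0 \pmod{m}$; dividing through by $d$ then yields $m' \mid v_n + a' F'[\vec{\eta}]$, giving $\eta_n \in \mathbb{Z}$ as required.
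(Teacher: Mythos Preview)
Your proof is correct and follows essentially the same approach as the paper's: induction on the dimension, peeling off the leading variable, reducing to a divisibility constraint by $d=\gcd(c_n,m)$ on the lower-dimensional part, and extending the inductively obtained TCOB by setting $\sigma(x_n)=m'x_n-a'F'$. The only cosmetic differences are that the paper writes $\invmod(c',m')$ where you take an arbitrary B\'ezout coefficient $a'$ (these agree modulo $m'$, which is all that matters), and the paper verifies integrality of $\eta_n$ via a chain of biconditionals rather than your direct congruence computation; your argument that $d v_n + a'F[\vec v]\equiv 0\pmod m$ together with $F[\vec v]=dF'[\vec\eta]$ is exactly the content of that chain.
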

\begin{proof}
Induction on $n$, the number of dimensions. In the base case $n = 0$
for which case $E = c$ where $c \in \mathbb{Z}$. Take $\sigma$ to be
the identity map on $\mathcal{E}$. Since $(m | c)$ is solvable we have
$c = m d$ for some $d \in \mathbb{Z}$. Thus $\sigma(E) = m d$,
satisfying property 1. The second property is trivially true since
there is only a single (empty) vector of dimension 0.

Now consider the case for $n > 0$ dimensions. Let $E = F + cx$ where
$dim(F) < n$, $c \in \mathbb{Z}$, and $dim(x) = n$. Let $g = \gcd(c,
m)$ and let $c'$ and $m'$ be such that $c = gc'$ and $m = gm'$. Since
$(m | E)$ is solvable, $(g | E)$ must also be solvable. Furthermore, we
know $(g | c)$ so $(g | F)$ is solvable. By the inductive hypothesis, let
$\sigma'$ be a TCOB with
\begin{enumerate}
    \item $\sigma'(F) = g F'$ for some $F'$, and
    \item $\forall \oldvector.~ (g | F[\oldvector]) \Longrightarrow \oldvector \in \ispan(\sigma')$.
\end{enumerate}
Then, let $\sigma$ be an extension of $\sigma'$ with $\sigma(x) = m'x
- \invmod(c', m')F'$. Here $\invmod(c', m')$ is the multiplicative
inverse of $c'$ modulo $m'$ which is well-defined since $\gcd(c', m')
= 1$. Note also that $dim(\sigma(x)) = dim(x)$. Using $\sigma$ we
have,
\begin{align*}
    \sigma(E) &= g F' + c\sigma(x) \\
    &= g F' + c(m'x - \invmod(c', m')F') \\
    &= cm'x + g F' - c\invmod(c', m')F' \\
    &= cm'x + g F' - gc'\invmod(c', m')F' \\
    &= cm'x + g (1 - c'\invmod(c', m'))F' \\
    &= cm'x + g(1 - (1 + m'k)) F' &\mbox{for some $k\in\mathbb{Z}$}\\
    &= cm'x + gm'k F' \\
    &= gc'm'x + gm'k F' \\
    &= gm'(c'x + k F') \\
    &= m(c'x + k F')
\end{align*}
So that $\sigma$ satisfies property 1.

To prove property 2 on $\sigma$, let $\oldvector$ be a vector such
that $(m | E[\oldvector])$. Then we have $(m | (F[\oldvector] +
c\oldvector_x))$ where $\oldvector_x$ is the $x$-component of
$\oldvector$. This implies $(g | F[\oldvector])$ and so by the
inductive hypothesis we have $\oldvector' \in span(\sigma')$ where
$\oldvector'$ is $\oldvector$ without the final $x$-component. This
means there is a vector $\newvector'$ such that 
$\oldvector' = \sigma[\newvector'].$ To show $\oldvector \in \ispan(\sigma)$
we want to extend $\newvector'$ with an $x$-component, $\newvector_x$,
such that $\oldvector = \sigma[\newvector]$.  Such a $\newvector_x$
exist if and only if we can satisfy $\oldvector_x = m'\newvector_x -
\invmod(c', m')F'[\newvector']$. This equation will have an integral
solution for $\newvector_x$ if and only if
\begin{align*}
    \oldvector_x &\equiv -\invmod(c', m')F'[\newvector'] \pmod{m'} &\Longleftrightarrow \\
    \oldvector_x + \invmod(c', m')F'[\newvector'] &\equiv 0 \pmod{m'} &\Longleftrightarrow \\
    c'\oldvector_x + c'\invmod(c', m')F'[\newvector'] &\equiv 0 \pmod{m'} &\Longleftrightarrow \\
    c'\oldvector_x + F'[\newvector'] &\equiv 0 \pmod{m'} &\Longleftrightarrow \\
    gc'\oldvector_x + gF'[\newvector'] &\equiv 0 \pmod{gm'} &\Longleftrightarrow \\
    c\oldvector_x + \sigma(F)[\newvector'] &\equiv 0 \pmod{m} &\Longleftrightarrow \\
    c\oldvector_x + F[\oldvector'] &\equiv 0 \pmod{m}
\end{align*}
Note that $F[\oldvector'] = F[\oldvector]$ since $dim(F) < n$. Thus we only need to show
$(m | (c\oldvector_x + F[\oldvector]))$ which is equivalent to $(m | E[\oldvector])$. This is true by
assumption. Thus property 2 also holds on $\sigma$.
\end{proof}

The proof of Lemma~\ref{tcob} contains an algorithm for
recursively computing a TCOB $\sigma$ for any solvable divisibility
constraint $(m | E)$. We demonstrate this algorithm in the following
example.

\begin{example}
Consider the divisibility constraint $(2 | (1 + x + y))$ where $dim(x) =
1$ and $dim(y) = 2$. The algorithm implicit in the proof of
Lemma~\ref{tcob} produces the TCOB $\sigma(x) = x$, $\sigma(y) =
2y - x - 1$. Letting $E = 1 + x + y$ we have $\sigma(E) = 2y$ so that
$(2 | \sigma(E))$ is true for all vectors.
\end{example}

Given a constraint $x_n = (N_k*x_k + \ldots + N_0)/D_n$ we have shown
that we can compute a TCOB $\sigma$ so that $(D_n | (N_k*x_k + \ldots
+ N_0))$ is always satisfied, allowing us to always compute an
integral value for $x_n$.  We can translate any such satisfying solution ($\newvector$)
into a solution of our original trapezoid via $\oldvector =
\sigma[\newvector]$.  It is easy to show that $\sigma$ preserves all
constraints on trapezoids. Therefore, we can repeatably apply this
procedure starting from the highest dimensions and working our way
down, eliminating all non-trivial integral constraints along the way.
Our final step is to update our reference vector to reflect the change
of base, ${\vec{v}\,}^{\prime} = \sigma^{-1}[\vec{v}] = \{ \vec{w} : \vec{v} = \sigma[\vec{w}]\}$.  We
know that such a vector exists because the span of $\sigma$ includes
all of the vectors that satisfied the original constraint, including
the original reference vector.

\begin{align*}
    \infer[\mbox{\sc integer-equality}]
          {I_n \gand T_m,\,\sigma,\,\vec{v} \overset{\textit{IE}}\longrightarrow I^{\prime}_n \gand T^{\prime}_m,\,\sigma^{\prime},\,{\vec{v}\,}^{\prime}}
          {\begin{aligned}
              &I_n \equiv (x_n = P/D_n)\\
              &x_n \in \mathbb{Z} \quad
              (D_n | P) \rightarrow \sigma_n \quad
              I^{\prime}_n = (x_n = \sigma_n(P)/D_n) \\
              &\sigma^{\prime} = \sigma_n(\sigma) \quad
              {\vec{v}\,}^{\prime} = \sigma^{-1}_n(\vec{v}) \quad
              T^{\prime}_m = \sigma_n(T_m)
            \end{aligned}
           }
\end{align*}

\subsection{Interval Restriction}

While double bounded intervals are known to be satisfiable at the
reference vector, there is no guarantee that they are satisfiable
under arbitrary variable assignments.  To ensure satisfiability, we
need to ensure that, for any valid variable assignment, the upper
bound will not be less than the lower bound and, for integer
variables, that an integer solution exists between the upper and lower
bound.

For all rational variables and for integer variables when $lcd(P^L) =
1$ or $lcd(P^U) = 1$ we simply intersect the domain restriction
$\normp{P^L \leq P^U}$ with the remaining trapezoid.  This ensures
that, for any variable assignment that satisfies the newly restricted
trapezoid, the lower bound will not be greater than the upper bound.
For integer variables this restriction is sufficient because either
$P^L$ or $P^U$ will be an integer valued polynomial and, as a result,
when $P^L \leq P^U$ is true it always contains an integer solution.

\begin{align*}
    \infer[\mbox{\textsc{interval-restriction-1}}]
          {I_n \gand T_m,\,\sigma,\,\vec{v} \overset{\textit{IR}}\longrightarrow I_n \gand T^{\prime}_m,\,\sigma,\,\vec{v}}
          {\begin{aligned}
              &I_n = (P^L \leq x_n) \gand (x_n \leq P^U)\\
              &(\; x_n \in \mathbb{Q} \; \bor \; lcd(P^L) = 1 \; \bor \; lcd(P^U) = 1 \;) \\
              &\normp{{P^L} \leq {P^U}} \gand T_m \vtos T^{\prime}_m
            \end{aligned}
           }
\end{align*}

When $lcd(P^L) > 1$ and $lcd(P^U) > 1$ the constraint $P^L \leq P^U$
is not sufficient to ensure an integer solution for $x$ because $P^L$
and $P^U$ are not guaranteed to be integer valued for every variable
assignment.  The constraint $P^L + 1 \leq P^U$, on the other hand, does
always contain an integer solution.  If this constraint contains the
reference vector (ie: $P^L[\vec{v}] + 1 \leq P^U[\vec{v}]$) then it is conservative
to use this constraint as a domain restriction.

\begin{align*}
    \infer[\mbox{\textsc{interval-restriction-2}}]
          {I_n \gand T_m,\,\sigma,\,\vec{v} \overset{\textit{IR}}\longrightarrow I_n \gand T^{\prime}_m,\,\sigma,\,\vec{v}}
          {\begin{aligned}
              &I_n = (P^L \leq x_n) \gand (x_n \leq P^U)\\
              &x_n \in \mathbb{Z} \\
              &lcd(P^L) > 1 \\
              &lcd(P^U) > 1 \\
              &P^{L}[\vec{v}] + 1 \leq P^{U}[\vec{v}] \\
              &\normp{P^L + 1 \leq P^U} \gand T_m \vtos T^{\prime}_m
            \end{aligned}
           }
\end{align*}

If $lcd(P^L) > 1$, $lcd(P^U) > 1$, and $P^L[\vec{v}] + 1 > P^U[\vec{v}]$ we
perform a change of base to ensure that either the upper or lower
bound is always integral.  Here we consider the change of base for the
lower bound with the understanding that the case of the upper bound is
symmetrical.  We first adjust (tighten) the lower bound in such a way
that it is equal to $x[\vec{v}]$ when evaluated at the reference vector.
Note that, because $P^L[\vec{v}] + 1 > P^U[\vec{v}]$, there is only 1 integer
solution in this interval and $x_n[\vec{v}]$ is it.  We compute a change
of base for $P^{\prime{L}}$ to ensure that it is always integral.  We then
restrict the domain of $T_m$ to ensure that the resulting upper bound
is never less than our new lower bound.

\begin{align*}
    \infer[\mbox{\textsc{interval-restriction-3}}]
          {I_n \gand T_m,\,\sigma,\,\vec{v} \overset{\textit{IR}}\longrightarrow I^{\prime}_n \gand T^{\primeII}_m,\,\sigma^{\prime},\,{\vec{v}\,}^{\prime}}
          {\begin{aligned}
              &I_n = (P^L \leq x_n) \gand (x_n \leq P^U)\\
              &x_n \in \mathbb{Z} \quad lcd(P^L) > 1 \quad lcd(P^U) > 1\\
              &P^{\prime{L}} = P^L + (x_n[\vec{v}] - P^L[\vec{v}])\\
              &D^L = lcd(P^{\prime{L}})\\
              &(D^L | P^{\prime{L}}) \rightarrow \sigma_n\\
              &P^{\prime{U}} = \sigma_n(P^{U})  \quad 
              P^{\primeII{L}} = \sigma_n(P^{\prime{L}})/D^L \\
              &I^{\prime}_n = (P^{\primeII{L}} \leq x_n) \; \gand \; (x_n \leq P^{\prime{U}})\\
              &\sigma^{\prime} = \sigma_n(\sigma) \quad
              {\vec{v}\,}^{\prime} = \sigma^{-1}_n(\vec{v}) \quad
              T^{\prime}_m = \sigma_n(T_m)\\
              &\normp{P^{\primeII{L}} \leq P^{\prime{U}}} \gand T^{\prime}_m {\mathrel{\overset{v^{\prime}}{\longrightarrow}\!\!{}^*}} T^{\primeII}_m
            \end{aligned}
            }
\end{align*}

\section{Implementation and ACL2 Formalization}
\label{sec:Formalization}

FuzzM\cite{FuzzM} is a model-based fuzzing framework, implemented
primarily in Java, that employs Lustre\cite{lustre:ieee} as a modeling
and specification language and leverages counterexamples produced by
the the JKind\cite{JKind} model checker to drive the fuzzing process.
The FuzzM infrastructure includes support for trapezoidal
generalization of JKind counterexamples relative to Lustre models
involving linear constraints, integer division and remainder, and
uninterpreted functions.  Our implementation of trapezoidal
generalization requires approximately 5K lines of Java code.

Early versions of the fuzzing framework leveraged an interval
generalization capability provided by JKind.  At the time we observed
that, while the results of interval generalization are trivial to sample, it 
does not generalize linear model features well.
Trapezoidal generalization was envisioned as an extension of interval
generalization that would preserve the property of efficient sampling
while improving support for linear constraints.  In architecting the
generalization procedure, however, it soon became clear that we didn't
know what it meant for our generalizer to be correct.  To address this
issue we developed a formalization of generalization correctness in
ACL2.  In a rump session of the 2017 ACL2 workshop we presented our
formalization of generalization correctness, a high-level
specification for our implementation, and a proof that our
implementation was correct\cite{rump}.  We also reported that, in proving that
our implementation was correct, we had identified an error in our
initial high-level specification.  Our inability to correctly specify
even the high-level behavior of our system without mechanical
assistance further motivated us to formalize and verify the
correctness of our low-level implementation as well.  Much of our
low-level ACL2 specification has been transliterated from the Java
implementation to ensure that it captures our essential design
decisions.  While the proofs of low-level correctness were useful in
ensuring the absence of off-by-one errors and the like, perhaps the
primary benefit of this activity was in the formalization process
itself.  Just as formalizing our high-level algorithms helped to
clarify our software architecture, formalizing the low-level behavior
of our algorithms has helped us both to clarify and to simplify the
resulting implementation.

We now have a proof that the low-level implementation of our
trapezoidal generalization procedure over linear constraints satisfies
our notion of generalization correctness.  At the heart of our
formalization is a library for manipulating, reasoning about, and
evaluating linear relations over rational polynomials.  On top of this
library we introduce a notion of normalized variable bounds upon which
we define the trapezoidal data structure and a predicate that
recognizes ordered trapezoids and regions.  The proof that our
low-level implementation satisfies our high level specification was
largely an exercise in proving that each low-level function preserved
our set of correctness invariants.  While the majority of this proof
effort was routine, two specific challenges were addressed with
specialized ACL2 libraries.  

First, to prove that functions operating on normalized variable bounds
and trapezoidal regions preserve our variable ordering invariants we
employ an arcane feature of the \emph{nary}\cite{nary} library that,
effectively, enables us to treat subset relations as equivalence
relations in appropriate contexts, rewriting subset terms into their
super-sets.  This feature allows us to specify function contracts using
rewrite rules, rather than back-chaining rules, to establish ordering
relations among the variables appearing in the function's results.  In
the supporting materials see \texttt{set\--upper\--bound\--equiv} and
\texttt{set\--upper\--bound\--ctx} defined by the \texttt{defequiv+}
event in the file \texttt{poly.lisp} along with the congruence rule
\texttt{set\--upper\--bound\--append} and the family of driver rules
exemplified by \texttt{>-all\--upper\--bound\--backchaining}.  See
also the nary equivalence relation
\texttt{set\--upper\--bound\--equiv} as used in
\texttt{set\--upper\--bound\--equiv\--all\--bound\--list\--variables\--restrict}
to specify a subset rewrite for specific variables returned by the
function \texttt{intersect} in the file \texttt{intersect.lisp} and
compare it with the more traditional \texttt{subset-p} rule found just
above in the lemma \texttt{trapezoid-p\--restrict}.

Second, because the termination of the function for performing
top-level trapezoidal intersection is a bit tricky, in that it is
doubly recursive, reflexive, and it relies on the variable ordering property of
trapezoids, we introduced its definition using the
\emph{def::ung}\cite{defung,assuming} macro.  This allowed us to
postpone its termination proof until we had established that the
intersection function preserved the variable ordering.  We then used
\emph{def::total} to prove that the function does, in fact, terminate on
well-formed trapezoids.  These events are found in the file
\texttt{intersect.lisp} in the supporting material.

In future work we plan to verify that our method for trapezoidal
restriction, as described in Section~\ref{sec:Sampling}, is correct.
Specifically we hope to show that restricted trapezoids are valid
generalizations and that they can be sampled to produce satisfying
variable assignments without backtracking, even for integer variables.
We also hope to extend our infrastructure to allow us to verify that
our approaches for generalizing uninterpreted function instances and
integer division and remainder, which we do not discuss in this
paper, are sound.

\section{Related Work}
\label{sec:Related}

We have compared the performance of our trapezoidal generalization
technique with the interval generalization capability provided by the
JKind model checker\cite{JKind}.  In all but the smallest models,
trapezoidal generalization was competitive with interval
generalization in terms of generalization speed.  For large models,
interval generalizations are actually more expensive to compute than
trapezoidal generalizations.  This is likely because trapezoidal
generalizations are computed in a single symbolic simulation of a
model whereas the interval generalization technique requires multiple
simulations to compute.  Within our fuzzing framework the sampling rates
for trapezoidal generalizations are also competitive with interval
generalization.  Tests in our framework suggest that interval
sampling rarely provides better than twice the overall performance of
trapezoidal sampling, and is typically less than 50 percent faster\footnote{Measurements
  are based on overall test vector generation rates, not
  generalization sampling rates in isolation}.
In terms of absolute performance, our deployed system routinely
exhibits effective trapezoidal sampling rates on the order of 100's of
thousands of generalized variables per second.

\begin{wrapfigure}{c}{0.2\textwidth}
  \centering
  \includegraphics[width=0.2\textwidth]{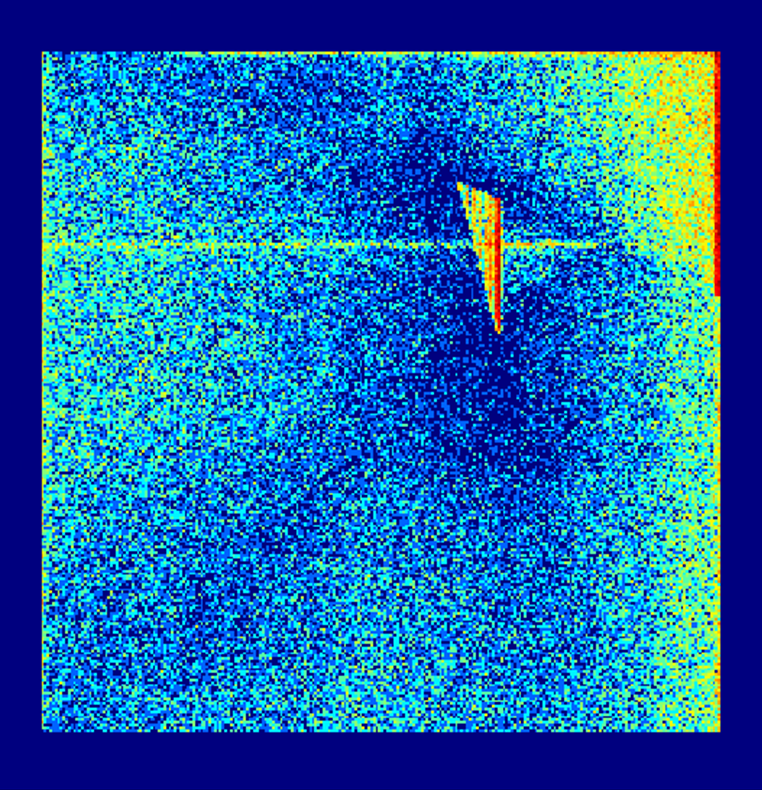}
  \includegraphics[width=0.2\textwidth]{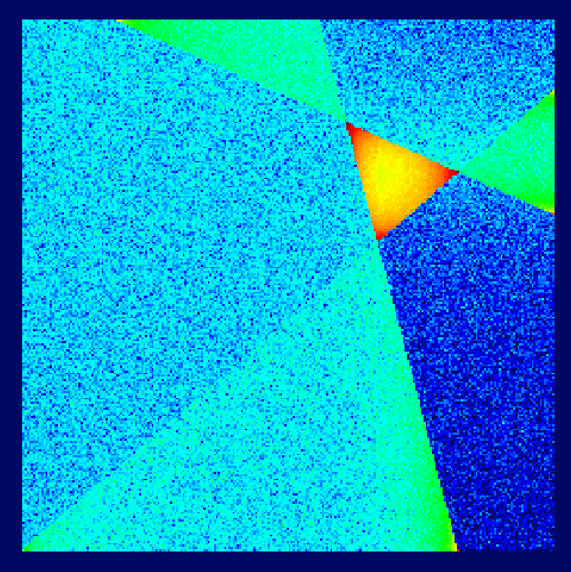}
\end{wrapfigure} 

Trapezoidal generalization also clearly outperforms interval
generalization in terms of preserving linear model features.  On the
right are two heat maps generated by capturing thousands of samples
from generalized solutions to a number of properties involving
arbitrary Boolean combinations of three linear constraints.  The upper
heat map employed interval generalization and the lower trapezoidal
generalization.  The three linear constraints involved in the various
properties are clearly visible in the lower heat map, whereas the upper
heat map is dominated by interval generalization artifacts.

Welp\cite{WelpQFBV} discusses a polytope generalization which is more
expressive than trapezoidal generalization but is also more expensive
to manipulate.  Operations on polytopes can be exponential in both
time and space.  The worst case size of our representation is
quadratic in the number of variables.  The intersection operation (on
ordered trapezoids) is worst case cubic and a naive implementation of
post-processing is worst case quartic for integer domains.  Sampling a
polytope is at least as hard as linear programming and for integers it
is NP-complete.  After post-processing, sampling a trapezoidal
generalization requires only a quadratic number of evaluations.
Trapezoidal generalization is thus more expressive than interval
generalization but more efficient than polytope generalization, both
from a computational point of view as well as from the perspective of
sampling.

Symbolic generalization techniques are also used in abstract
interpretation\cite{AbstractInterpretation}.  Because efficiency is an
overriding concern the abstractions commonly used therein are often
quite limited\cite{AbstractIntervals}.  While more complex and
expressive representations are typically less efficient, the octagon
domain\cite{DBLP:Octagon} is actually quite comparable to trapezoids
in terms of both computational and representational complexity.
Efficient sampling of such abstract domains, however, is generally not
a concern.  Also, the objective of generalization in abstract
interpretation is typically verification; generalization for this
purpose is common\cite{cegar}.  When used for verification,
generalization should provide a conservative over-approximations of
variable domains.  Our specification, on the other hand, calls for a
conservative under-approximation of such domains.  As a result, a
generalization technique designed for one domain would be unsound in
the other.

\section{Conclusion}

We have presented a generalization technique for logical formulae
described in terms of Boolean combinations of linear constraints that
is optimized for use with model-based fuzzing.  The generalization
technique employs trapezoidal solution sets that can be computed with
reasonable space and time bounds and then sampled efficiently.  We
developed a formal specification of generalization correctness and
used it to verify the correctness of key aspects of our generalization
algorithm using ACL2.  Finally we described a post-processing
procedure that produces restricted trapezoidal solutions that can be
rapidly and efficiently sampled without backtracking, even for integer
domains.  The formal verification of this post-processing step is
expected to be addressed in future work.

\bibliographystyle{eptcs}
\bibliography{ref}{}
\end{document}